\newcommand{\beq}[1]{\begin{equation}\label{#1}}
\newcommand{\eeq}{\end{equation}}
\newcommand{\req}[1]{(\ref{#1})}
\newcommand{\bmu}[1]{\begin{multline}\label{#1}}
\newcommand{\emu}{\end{multline}}
\newcolumntype{P}[1]{>{\centering\arraybackslash}p{#1}}
\renewcommand{\(}{\left(}
\renewcommand{\)}{\right)}
\newcommand{\eq}{\triangleq}
\newcommand{\A}{\mathcal{A}}
\newcommand{\C}{\mathcal{C}}
\renewcommand{\P}{\mathcal{P}}
\newcommand{\E}{\textbf{E}}
\renewcommand{\H}{\textbf{H}}
\newtheoremstyle{agdTheorem}{\parskip}{\parskip}{\itshape}{\parindent}{\bfseries}{}{0pt}{\thmname{#1}\thmnumber{~#2}.\thmnote{~\textnormal{#3.}}\quad}
\theoremstyle{agdTheorem}
\newtheorem{theorem}{Theorem}
\newtheorem{lemma}{Lemma}
\newtheorem{proposition}{Proposition}
\newtheoremstyle{agdDefinition}{\parskip}{\parskip}{}{\parindent}{\bfseries}{}{0pt}{\thmname{#1}\thmnumber{~#2}.\thmnote{~\textnormal{#3.}}\quad}
\theoremstyle{agdDefinition}
\newtheorem{remark}{Remark}
\title{Floor Scale Modulo Lifting for QC-LDPC codes}
\author{\IEEEauthorblockN{Nikita Polyanskii, Vasiliy Usatyuk, and  Ilya Vorobyev}
	\IEEEauthorblockA{Huawei Technologies Co.,
		Moscow, Russia\\
		Email: nikitapolyansky@gmail.com,\quad l@lcrypto.com,\quad vorobyev.i.v@yandex.ru}}
\begin{document}
\maketitle
\begin{abstract}

In the given paper we present a novel approach for constructing a QC-LDPC code of smaller length by lifting a given QC-LDPC code. The proposed method can be considered as a generalization of floor lifting. Also we prove several probabilistic statements concerning a theoretical improvement of the method with respect to the number of small cycles. Making some offline calculation of scale parameter it is possible to construct a sequence of QC-LDPC codes with different circulant sizes generated from a single exponent matrix using only floor and scale operations. The only parameter we store in memory is a constant needed for scaling. %Simulation results show that sequences of QC-LDPC matrices generated in the proposed manner have no performance degradation, as compared to state-of-art ME-LDPC codes.}
\end{abstract}
\textbf{Keywords:}\quad {QC-LDPC code, floor lifting, modulo lifting, block cycle, girth.}

\section{Introduction}
Low-density parity-check (LDPC) codes were first discovered by Gallager~\cite{G65}, generalized  by Tanner~\cite{Tanner81}, Wibberg ~\cite{Wibb96}  and rediscovered by MacKay et al.~\cite{McK96,McK99} and Sipser et al.~\cite{SipSpiek96}.

Quasi-cyclic low-density parity-check (QC-LDPC) codes are of great interest to researchers~\cite{KShuFoss01,DjXAGShu03,TanSSFCost04,ChXuDju04,DivDolJ06} since they can be encoded and decoded with low complexity and allow to reach high throughput using linear-feedback shift register~\cite{MYK05,MRM09,ChCo15}.

One advantage of QC-LDPC codes based on circulant permutation matrices (CPM) is that it is easier to analyze their code and graph properties than in the case of random LDPC codes. The performance of LDPC codes is  strongly affected by their graph properties such as the length of the shortest cycle, i.e., girth~\cite{Fossorier04,WDY13}, and 
%cycles topology -
trapping  sets~\cite{VasCNP09,DDOV15}
and code properties, e.g., the distance of the code~\cite{SmVont09,ButSie13} 
and the ensemble weight enumerator~\cite{Divs06}.

%Another advantage of QC-LDPC codes is memory efficiency.  Instead of saving full parity-check matrix in the case of random LDPC codes, QC-LDPC codes require store only shift value for each circulant permutation matrix. 
%This properties allow to reduce memory requirement ${1 \mathord{\left/{\vphantom{1 L}}\right.\kern-\nulldelimiterspace} L} $-times for QC-LDPC codes with CPM of size $L$. 
%It become very important in cases when we need store many code's description per each code length, for example in case of apply forward error correction under non-quasi static wireless (LTE like) channel model.  

The main contribution of the paper is a novel approach for constructing a quasi-cyclic LDPC code of smaller length by lifting a given QC-LDPC code. The proposed method can be considered as a generalization of floor lifting method introduced in~\cite{myk05,myk06}. %Also we prove several probabilistic statements concerning a theoretical improvement of the method with respect to girth.
Making some offline calculation it is possible to construct a sequence of QC-LDPC codes with different circulant sizes generated from a single exponent matrix of QC-LDPC code having the largest length. The only parameter we store in memory is a constant needed for scaling in the lifting procedure.
%Simulation results of proposed lifting approach show that sequences of QC-LDPC codes generated in the proposed manner have no serious performance degradation, as compared to state-of-art QC-LDPC codes~\cite{QCM16,SAMd16}.

The outline of the paper is as follows. In Section~\ref{QCLDPC}, we  %review QC-LDPC codes and
introduce some basic definitions  and notations for our presentation. In Section~\ref{LIFTING}, we review state-of-art lifting methods for QC-LDPC codes. Also assuming some natural assumption we prove some probabilistic statements with respect to cycles of length $4$ and provide a comparison between lifting procedures. In Section~\ref{OURLIFT}, we present our floor scale modulo lifting method for QC-LDPC codes and prove several probabilistic statements concerning theoretical improvement of the method with respect to the number of small cycles. The performance of QC-LDPC codes obtained by the floor scale modulo lifting method is investigated by simulations in Section~\ref{SIMULATION}.

\section{QC-LDPC CODES}\label{QCLDPC}
A QC-LDPC code is described by a parity-check matrix $\H$ which consists of square blocks which could be either zero matrix or circulant permutation matrices.  Let $P=(P_{ij})$ be  the $L\times L$ \textit{circulant permutation matrix} defined by
$$
P_{ij}=
\begin{cases}
1,\quad\text{if } i+1\equiv j \mod L\\
0,\quad \text{otherwise}.
\end{cases}
$$
Then   $P^k$ is the circulant permutation matrix (CPM) which shifts the identity matrix $I$ to the right by $i$ times for any $k$, $0\le k\le L-1$. For simplicity of notation denote the zero matrix by $P^{-1}$. Denote the set $\{-1, 0, 1,\ldots, L-1\}$ by $\A_L$. %, the set $\{P^{-1}, P^0,\ldots, P^{L-1}\}$ by $\P_L$.
Let the matrix $\H$ of size $mL\times nL$ be defined in the following manner
\beq 
 $\H=\left[\begin{array}{cccc} {P^{a_{11} } } & {P^{a_{12} } } & {\cdots } & {P^{a_{1n} } } \\ {P^{a_{21} } } & {P^{a_{22} } } & {\cdots } & {P^{a_{2n} } } \\ {\vdots } & {\vdots } & {\ddots } & {\vdots } \\ {P^{a_{m1} } } & {P^{a_{m2} } } & {\cdots } & {P^{a_{mn} } } \end{array}\right],
\eeq
where $a_{i,j} \in \A_L$. Further we call $L$ the circulant size of $\H$. In what follows a code $C$ with parity-check matrix $\H$ will be referred to as a \textit{QC-LDPC code}.
% If $H$ is a full rank matrix, then we can assign $(n-m)L$ information bit to some (n-b) CPMs, when code rate  is given by $r=1-\frac{m}{n-p} $ where $p$ is number of punctured CPM, regardless of its code length $N=nL$. Without loss of generality block matrix of QC-LDPC codes can defined by the sum of several circulant shifts to obtain Multi-edge type QC-LDPC (ME-LDPC) codes: \begin{center} $H'=$$H_{1} +H_{2} +...+H_{z} $. \end{center}
Let $E(\H) = (E_{ij}(\H))$ be the \textit{exponent matrix} of $\H$ given by:
\beq 
$E(\H)=\left[\begin{array}{cccc} {a_{11} } & {a_{12} } & {\cdots } & {a_{1n} } \\ {a_{21} } & {a_{22} } & {\cdots } & {a_{2n} } \\ {\vdots } & {\vdots } & {\ddots } & {\vdots } \\ {a_{m1} } & {a_{m2} } & {\cdots } & {a_{mn} } \end{array}\right],
\eeq
i.e., the entry $E_{ij}(\H) = a_{ij}$. The \textit{mother matrix} $M(\H)$ is a $m\times n$ binary matrix obtained from replacing $-1$'s and other integers by $0$ and $1$, respectively, in $E(\H)$. If there is a cycle of length $2l$ in the Tanner graph of $M(\H)$, it is called a \textit{block-cycle} of length $2l$.  Any block-cycle in $M(\H)$ of length $2l$ corresponds both to the sequence of $2l$ CPM's $\{P^{a_1},P^{a_2}\ldots, P^{a_{2l}}\}$ in $\H$ and sequence of $2l$ integers $\{a_1, a_2 \ldots a_{2l}\}$ in $E(\H)$ which will be called \textit{exponent chain}. 

The following well known result gives the easy way to find cycles in the Tanner graph of parity-check matrix $H$.
\begin{proposition}[\cite{Fossorier04}]\label{prop::FossCondition}

An exponent chain forms a cycle in the Tanner graph of $H$ iff the following condition holds
$$
\sum_{i=1}^{2l}(-1)^i a_i\equiv 0 \mod L.
$$
\end{proposition}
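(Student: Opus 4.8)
The plan is to translate the combinatorial structure of a cycle in the Tanner graph into an algebraic condition on the CPM exponents. First I would fix notation: a cycle of length $2l$ in the Tanner graph of $H$ visits alternately variable nodes and check nodes, and each edge it traverses corresponds to a $1$ in one of the CPMs $P^{a_i}$. Since each $P^{a_i}$ is an $L\times L$ permutation matrix, its nonzero entries sit in positions $(r, r+a_i \bmod L)$; traversing an edge of the Tanner graph means moving from a bit-row coordinate to a check-column coordinate (or vice versa) within one such block, and I would make this row/column bookkeeping precise so that each step of the walk either adds or subtracts the corresponding exponent depending on the direction of traversal.

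The key steps, in order, are as follows. First I would set up local coordinates: label the $L$ rows inside a block row and the $L$ columns inside a block column by elements of $\mathbb{Z}_L$, and record that the single $1$ in block $P^{a_i}$ occupies column $c$ exactly when its row is $c - a_i \bmod L$. Second, I would walk around the purported cycle, writing $\beta_0, \beta_1, \ldots, \beta_{2l} = \beta_0$ for the successive coordinates at the nodes visited, and observe that consecutive coordinates are linked by $\beta_k = \beta_{k-1} \pm a_k \bmod L$, where the sign alternates because the walk alternates between going ``through'' a block in the row-to-column direction and in the column-to-row direction. Third, imposing the closure condition $\beta_{2l} = \beta_0$ and telescoping the $2l$ relations, the coordinates cancel in pairs and one is left precisely with $\sum_{i=1}^{2l} (-1)^i a_i \equiv 0 \bmod L$. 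Conversely, if this sum vanishes modulo $L$, I would show the chain of local coordinates closes up consistently, so the edges it selects do form a genuine closed walk, and (using that a block-cycle has no repeated block) that this walk is a bona fide cycle in the Tanner graph.

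For both directions it is essential that the block-cycle underlying the exponent chain is a genuine cycle in the mother matrix, so that the sequence of blocks $P^{a_1}, \ldots, P^{a_{2l}}$ alternates between rows and columns without immediately backtracking; this guarantees the alternating signs $(-1)^i$ are well defined and that the reconstructed walk in the lifted graph does not degenerate. I would state this structural fact once at the start and then use it freely.

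The main obstacle I expect is purely in the bookkeeping of signs and of the row/column convention for which coordinate is ``free'' at each node, since an off-by-one in the definition of $P$ (here $P_{ij} = 1$ iff $i + 1 \equiv j$) propagates into whether a given step contributes $+a_k$ or $-a_k$. The conceptual content — telescoping the local constraints around the loop so that only the alternating sum of exponents survives — is straightforward; getting the alternating sign assignment to match the paper's indexing $\sum_{i=1}^{2l}(-1)^i a_i$ is the step most likely to require care.
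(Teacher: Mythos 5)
The paper does not prove this proposition at all: it is imported verbatim as a known result from the cited reference \cite{Fossorier04}, so there is no in-paper argument to compare against. Your sketch --- tracking the local row/column coordinate $\beta_k$ through each CPM block, using that $P^{a_i}$ has its $1$'s at positions $(r,\,r+a_i \bmod L)$, telescoping the relations $\beta_k \equiv \beta_{k-1}\pm a_k$ around the closed walk, and in the converse using that a block-cycle visits distinct blocks so the lifted closed walk has distinct vertices --- is precisely the standard proof given in that reference, and it is correct.
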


\section{LIFTING OF QC-LDPC CODES}\label{LIFTING}
%\quad
%\subsection{Notations}
%\label{Notations}

%\textbf{Lifting of QC-LDPC} is method using which we change QC-LDPC code length by changing value in exponent matrix. %Depend from what size of CPM we begin lifting we can disjoint two different type of lifting:

%textbf{Lifting of mother matrix} $M(H)$ is method to obtain a QC-LDPC code with length $N=nL_{k}$  by replacing $`1'$ in $M(H)$ by some values of shift from $\left\{0,1,...,L_{k}-1\right\}$.  When we design QC-LDPC code we try to found best protograph matrix for example based on threshold and then after try to lifting it to reach threshold error correcting capability. It usually doing from smaller CPM size to highest.

%textbf{Lifting of QC-LDPC code} with code length $N=nL_{s}$ is method to obtain different size(sizes) of circulant from already defined exponent matrix $E(H_{s})$.  When adjusting the size of CPM $L_{k}$ according to the target code length $N=nL_{k}$, each exponent values $E(H_{k})$ can be calculated by the specified equation. It doing from highest CPM size to smallest, by store description of exponent matrix for maximal code length and after lifting generate all shorter code length QC-LDPC codes on the fly. This type of lifting used in IEEE 802.16 and proposed for length adaptation in 5G, ~\cite{QCM16}, ~\cite{SAM16} .  
\subsection{State-of-art Lifting Methods}
Consider a QC-LDPC code with $mL_0\times nL_0$ parity-check matrix $\H_0$ with circulant size $L_0$, $m\times n$ exponent matrix $E(\H_0) = (E_{ij}(\H_0))$ and mother matrix $M(\H_0)$. Given a set of circulant sizes $\{L_k\}$, $L_k<L_0$, \textit{lifting} is a method of constructing QC-LDPC codes with $mL_k\times nL_k$ parity-check matrices $\H_k$ from $\H_0$, which have the same mother matrix $M(\H_k)=M(\H_0)$ and entries of exponent matrices $E(\H_k)$ satisfy $-1 \le E_{ij}(\H_k)\le L_k -1$. Therefore, it suffices to specify a formula using which we recalculate each value of $E(\H_k)$ from $E(\H_0)$. In paper~\cite{myk05} two lifting approaches are given.

\textbf{Floor lifting} is defined as follows:
\beq{FloorLifting}
E_{ij}(H_{k} )= 
\begin{cases}
\left\lfloor \frac{L_{k} }{L_{0} } \times E_{ij}(H_{0} )\right\rfloor, &\text{if } E_{ij}(H_{0})\neq -1,\\
-1,\quad &\text{otherwise}.
\end{cases}
\eeq

\textbf{Modulo lifting} is determined by the following equation:
\beq{ModuloLifting}
E_{ij}(H_{k} )=\begin{cases}
E_{ij}(H_{0} )\mod L_{k}, &\text{if } E_{ij}(H_{0})\neq -1,\\
-1,\quad &\text{otherwise}.
\end{cases}  
\eeq

Now we prove several probabilistic statements.

Consider an exponent chain of length 4 with exponent values $a$, $b$, $c$, $d$
\begin{center}
$A=\left[\begin{array}{cc} {a} & {b}\\ {c} & {d} \end{array}\right]$,
\end{center}
where each element is chosen independently and equiprobable from the set $\{0, 1, \ldots, 2q - 1\}$, $L_0=2q$ is a circulant size, $q > 2$. Notice that the probability of the event $\C_0$: ``the exponent chain with exponent values $a$, $b$, $c$ and $d$ forms a cycle'', i.e., $a-b-c+d \equiv 0 \mod 2q$,
is equal to $1 / (2q)$. Assume that we use some lifting method to obtain exponent values $a'$, $b'$, $c'$, $d'$
\begin{center}
$B=\left[\begin{array}{cc} {a'} & {b'}\\ {c'} & {d'} \end{array}\right]$,
\end{center}
for circulant size $L_1=q$. We are interested in the probabilities of an event $\C_1$: ``the exponent chain with exponent values $a'$, $b'$, $c'$ and $d'$ forms a cycle'' given the event $\C_0$ and given the event $\overline{\C_0}$. In Sections~\ref{FLOOR} and~\ref{MODULO} we obtain these probabilities for floor lifting and modulo lifting, respectively. Finally, we summarize results and compare these two methods in Section~\ref{FIRSTCONCL}.

\subsection{Floor Lifting} \label{FLOOR}
Let $a = 2a_1 +a_2$, $b = 2b_1 + b_2$, $c = 2c_1 + c_2$ and $d = 2d_1 + d_2$, where $a_2, b_2, c_2, d_2\in \{0,1\}$. One can see that $a'=a_1$, $b'=b_1$, $c'=c_1$, $d'=d_1$. Given the event $\C_0$ occurs, i.e.
$$
2(a_1-b_1-c_1+d_1) + (a_2-b_2-c_2+d_2) \equiv 0 \mod 2q.
$$
the event $\C_1$, i.e., $a_1-b_1+d_1-c_1 \equiv 0 \mod q,$
is equivalent to the condition $a_2-b_2-c_2+d_2=0$.
From $\C_0$ it follows that $a_2-b_2+d_2-c_2  \equiv 0 \mod 2.$
Therefore, the conditional probability
\begin{multline*}
\Pr(\C_1\mid \C_0)=\Pr(a_2-b_2-c_2+d_2 = 0\mid\C_0) =  \\
\Pr(a_2-b_2-c_2+d_2 = 0\mid a_2-b_2-c_2+d_2\equiv 0 \mod 2) = 3/4.
\end{multline*}
Indeed we have exactly $8=2^3$ equiprobable choices for $a_2,b_2,c_2,d_2$ depicted in Table~\ref{choices}, $6=\binom{4}{2}$ of which give the cycle.
\begin{table}[ht]
\caption{Possible choices for $a_2,b_2,c_2,d_2$}
\label{choices}
\begin{center}
\begin{tabular}{|P{15pt}|P{15pt}|P{15pt}|P{15pt}|P{55pt}|}
\hline
$a_2$& $b_2$ & $c_2$ & $d_2$ & $a_2-b_2-c_2+d_2$ \\
\hline 0 &  0 & 0 & 0 & 0 \\
\hline 0 &  0 & 1 & 1 & 0 \\
\hline 0 &  1 & 0 & 1 & 0 \\
\hline 0 &  1 & 1 & 0 & -2 \\
\hline 1 &  0 & 0 & 1 & 2 \\
\hline 1 &  0 & 1 & 0 & 0 \\
\hline 1 &  1 & 0 & 0 & 0 \\
\hline 1 &  1 & 1 & 1 & 0 \\
\hline
\end{tabular}
\end{center}
\end{table}
%After applying floor lifting~\req{FloorLifting} with $L'=k$ we get matrix $B$:
%\begin{center}
%$B=\left[\begin{array}{cc} {a_1} & {b_1}\\ {c_1} & {d_1} \end{array}\right]$.
%\end{center}
%By $\B_c$ denote the event: ``$a_1$, $b_1$, $c_1$ and $d_1$ form a cycle''. Then the conditional probability
%$$
%\Pr(\B_c\mid \A_c) = 3/4
%$$
%means that after floor lifting a cycle remains a cycle. 

Now let us find the probability $\Pr(\C_1\mid\overline{\C_0})$. Since 
$$
\Pr(\C_1\mid\overline{\C_0})=\frac{\Pr(\C_1\overline{\C_0})}{\Pr(\overline{\C_0})}
$$ 
and $\Pr(\overline{\C_0})=\frac{2q-1}{2q}$, it suffices to obtain $\Pr(\C_1\overline{\C_0})$. Find the number of all 4-tuples $(a, b, c, d)$, such that $a_1-b_1-c_1+d_1\equiv 0 \mod q$ and 
$a_2-b_2-c_2+d_2\not\equiv 0 \mod 2q$.
We have $q^3$ ways to choose $a_1, b_1, c_1, d_1$ and $10$ ways  to choose $a_2, b_2, c_2, d_2$ for $q>2$. Therefore, 
$$
\Pr(\C_1\overline{\C_0})=\frac{10q^3}{(2q)^4}=\frac{5}{8q}\quad\text{and }\Pr(\C_1\mid\overline{\C_0})=\frac{5}{4(2q-1)}.
$$ 
Let us sum up the results in
\begin{proposition}\label{pr::floorProb}
An exponent chain in $E(\H)$ of length 4, which forms a cycle in the parity-check matrix $\H$ with circulant size $2q$, turns into a cycle in the parity-check matrix $\H'$ with circulant size $q$ obtained after floor lifting with probability $3/4$, while an exponent chain of length 4, which does not form a cycle, turns into a cycle with probability $p_{fl}\eq 5/(4(2q-1))$.
\end{proposition}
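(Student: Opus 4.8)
The plan is to reduce both conditional probabilities to elementary counting problems after splitting each exponent into its quotient and remainder modulo $2$. First I would write $a = 2a_1 + a_2$, $b = 2b_1 + b_2$, $c = 2c_1 + c_2$, $d = 2d_1 + d_2$ with $a_2,b_2,c_2,d_2 \in \{0,1\}$, so that floor lifting for $L_1 = q$ sends each exponent to its quotient, i.e. $a' = a_1$ and likewise for the rest. The key structural observation is that, because $a,b,c,d$ are independent and uniform on $\{0,\dots,2q-1\}$, the quotient bits $a_1,b_1,c_1,d_1$ (uniform on $\{0,\dots,q-1\}$) and the remainder bits $a_2,b_2,c_2,d_2$ (uniform on $\{0,1\}$) are all mutually independent. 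Introducing $S_1 = a_1 - b_1 - c_1 + d_1$ and $S_2 = a_2 - b_2 - c_2 + d_2$, the defining relation of $\C_0$ becomes $2S_1 + S_2 \equiv 0 \bmod 2q$, while $\C_1$ becomes $S_1 \equiv 0 \bmod q$.

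For $\Pr(\C_1 \mid \C_0)$ I would proceed as follows. Reducing the relation of $\C_0$ modulo $2$ forces $S_2$ to be even, so $S_2 \in \{-2,0,2\}$. On $\C_0$ the congruence $2S_1 \equiv -S_2 \bmod 2q$ holds, and since $|S_2| \le 2 < 2q$ (this is where $q > 2$ enters), the condition $S_1 \equiv 0 \bmod q$ is equivalent to $S_2 = 0$; hence $\Pr(\C_1 \mid \C_0) = \Pr(S_2 = 0 \mid \C_0)$. It then remains to show that, conditioned on $\C_0$, the eight even patterns of $(a_2,b_2,c_2,d_2)$ are equiprobable, after which a direct enumeration shows that $6$ of these $8$ give $S_2 = 0$, yielding $3/4$.

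This equiprobability is the step I expect to be the main obstacle, and I would settle it with the standard uniformity lemma. For any fixed remainder pattern making $S_2$ even, $\C_0$ reduces to the single residue condition $S_1 \equiv -S_2/2 \bmod q$; since a signed sum of independent variables that are uniform modulo $q$ is again uniform modulo $q$, and $S_1$ depends only on the quotient bits, this residue condition has probability exactly $1/q$, independent of which even pattern was fixed. Thus every even pattern contributes equally to $\Pr(\C_0)$, which also recovers $\Pr(\C_0) = 8\cdot(1/16)(1/q) = 1/(2q)$ as a sanity check and gives the claimed conditional distribution.

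For $\Pr(\C_1 \mid \overline{\C_0})$ I would use $\Pr(\C_1 \mid \overline{\C_0}) = \Pr(\C_1 \cap \overline{\C_0})/\Pr(\overline{\C_0})$ with $\Pr(\overline{\C_0}) = (2q-1)/(2q)$. On $\C_1$ one has $2S_1 \equiv 0 \bmod 2q$, so there $\overline{\C_0}$ reduces to $S_2 \neq 0$; because $S_1$ and $S_2$ are functions of disjoint independent blocks of bits, the two events factor, giving $\Pr(\C_1 \cap \overline{\C_0}) = \Pr(S_1 \equiv 0 \bmod q)\,\Pr(S_2 \neq 0) = (1/q)(10/16) = 5/(8q)$. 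Dividing by $\Pr(\overline{\C_0})$ then yields $p_{fl} = 5/(4(2q-1))$. The only care needed here is the count of remainder patterns ($6$ give $S_2 = 0$, the remaining $10$ give $S_2 \neq 0$) and, once more, the collapse of $\equiv 0 \bmod 2q$ to $= 0$, which is valid precisely because $q > 2$.
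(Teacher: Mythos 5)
Your proposal is correct and follows essentially the same route as the paper: split each exponent as $2a_1+a_2$, observe that floor lifting keeps the quotient, reduce $\C_0$ modulo $2$ to force the remainder sum into $\{-2,0,2\}$, enumerate the $6$ of $8$ even patterns giving $S_2=0$ for the $3/4$, and count $q^3\cdot 10$ tuples over $(2q)^4$ for $p_{fl}$. The only difference is that you explicitly justify the equiprobability of the even remainder patterns conditioned on $\C_0$ (via uniformity of the signed quotient sum modulo $q$), a step the paper leaves implicit.
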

\subsection{Modulo Lifting}\label{MODULO}
Let $a = a_1q+a_2$, $b = b_1q+b_2$, $c = c_1q+c_2$, $d = d_1q+d_2$, where $a_2, b_2, c_2, d_2\in\{0,1,\ldots, q-1\}$. It is easy to check that $a'=a_2$, $b'=b_2$, $c'=c_2$, $d'=d_2$. Given the event $\C_0$ occurs, we have 
$$
q(a_1-b_1-c_1+d_1) + (a_2-b_2-c_2+d_2) \equiv 0 \mod 2q.
$$
It follows that 
$$
a'-b'-c'+d'=a_2-b_2-c_2+d_2  \equiv 0 \mod q,
$$
%After applying modulo lifting~\req{ModuloLifting} with $L'=q$ we obtain matrix $B$:
%\begin{center}
%$B=\left[\begin{array}{cc} {a_2} & {b_2}\\ {c_2} & {d_2} \end{array}\right]$.
%\end{center}
thus the conditional probability $\Pr(\C_1\mid \C_0) = 1.$ 
Let us obtain probability $\Pr(\C_1\mid\overline{\C_0})$. Since 
$$
\Pr(\C_1\mid\overline{\C_0})=\frac{\Pr(\C_1\overline{\C_0})}{\Pr(\overline{\C_0})}
$$ 
and $\Pr(\overline{\C_0})=\frac{2q-1}{2q}$, we need to find $\Pr(\C_1\overline{\C_0})$. Calculate the number of all 4-tuples $(a, b, c, d)$, such that $a_2-b_2-c_2+d_2\equiv 0 \mod q$ and $a-b-c+d\not\equiv 0 \mod 2q$.
We have $q^3$ ways to choose $a_2, b_2, c_2, d_2$ and $8$ ways  to choose $a_2, b_2, c_2, d_2$ for $q>2$. Therefore, 
$$
\Pr(\C_1\overline{\C_0})=\frac{8q^3}{(2q)^4}=\frac{1}{2q}\quad\text{and }
\Pr(\C_1\mid\overline{\C_0})=\frac{1}{2q-1}.
$$

As a result we have obtained the following

\begin{proposition}\label{pr::moduloProb}
An exponent chain in $E(\H)$ of length 4, which forms a cycle in the parity-check matrix $\H$ with circulant size $2q$, turns into a cycle in the parity-check matrix $\H'$ with circulant size $q$ obtained after modulo lifting with probability $1$, while an exponent chain of length 4, which does not form a cycle, turns into a cycle with probability $p_{mod}\eq 1/(2q-1)$.
\end{proposition}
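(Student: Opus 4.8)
The plan is to treat the two conditional probabilities separately, using the mixed-radix decomposition $a = a_1 q + a_2$ with $a_1\in\{0,1\}$ and $a_2\in\{0,\dots,q-1\}$ (and analogously for $b,c,d$), so that modulo lifting gives $a'=a_2$, and, by Proposition~\ref{prop::FossCondition}, the cycle conditions read $a-b-c+d\equiv 0\mod 2q$ for $\C_0$ and $a_2-b_2-c_2+d_2\equiv 0\mod q$ for $\C_1$. Throughout I set $S_1 = a_1-b_1-c_1+d_1$ and $S_2 = a_2-b_2-c_2+d_2$, so that $a-b-c+d = qS_1+S_2$.

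First I would establish $\Pr(\C_1\mid\C_0)=1$ by a deterministic implication rather than by counting. The event $\C_0$ forces $qS_1+S_2\equiv 0\mod 2q$, hence in particular $qS_1+S_2\equiv 0\mod q$; since $qS_1\equiv 0\mod q$ automatically, this yields $S_2\equiv 0\mod q$, which is exactly $\C_1$. Thus $\C_0$ implies $\C_1$ and the conditional probability equals $1$.

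For the second claim I would write $\Pr(\C_1\mid\overline{\C_0})=\Pr(\C_1\overline{\C_0})/\Pr(\overline{\C_0})$ and use the already-noted value $\Pr(\overline{\C_0})=(2q-1)/(2q)$, so that only $\Pr(\C_1\overline{\C_0})$ remains. This reduces to counting the 4-tuples $(a,b,c,d)\in\{0,\dots,2q-1\}^4$ with $S_2\equiv 0\mod q$ but $qS_1+S_2\not\equiv 0\mod 2q$, which I would carry out in two independent stages matching the radix split. For the low digits, fixing $a_2,b_2,c_2$ arbitrarily and solving $d_2\equiv b_2+c_2-a_2\mod q$ for the unique $d_2\in\{0,\dots,q-1\}$ shows there are exactly $q^3$ admissible choices of $(a_2,b_2,c_2,d_2)$ realizing $\C_1$.

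The main obstacle, and the only genuinely numerical step, is to show that for each fixed admissible low part the number of high-digit choices $(a_1,b_1,c_1,d_1)\in\{0,1\}^4$ satisfying $\overline{\C_0}$ is exactly $8$, independently of that low part. Writing $S_2=qm$, we get $qS_1+S_2=q(S_1+m)$, so $\C_0$ is equivalent to $S_1+m$ being even and $\overline{\C_0}$ to $S_1+m$ being odd. Since $S_1\equiv a_1+b_1+c_1+d_1\mod 2$, the parity of $S_1$ is the Hamming-weight parity of $(a_1,b_1,c_1,d_1)$, and the $16$ binary 4-tuples split evenly into $8$ of even weight and $8$ of odd weight; whichever parity is forced by $m$, exactly one of these two classes qualifies, giving $8$ choices regardless of $m$. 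Multiplying the two stages gives $8q^3$ tuples, whence $\Pr(\C_1\overline{\C_0})=8q^3/(2q)^4=1/(2q)$ and $\Pr(\C_1\mid\overline{\C_0})=p_{mod}=1/(2q-1)$. I expect the even/odd weight count to be the only place needing care, the remainder being a forced implication and a unique-solution argument.
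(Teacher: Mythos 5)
Your proof is correct and takes essentially the same route as the paper's: the same radix-$q$ decomposition $a=a_1q+a_2$, the same deterministic implication giving $\Pr(\C_1\mid\C_0)=1$, and the same count $8q^3/(2q)^4=1/(2q)$ for $\Pr(\C_1\overline{\C_0})$. The only difference is that you explicitly justify the ``$8$ ways'' for the high digits via the even/odd Hamming-weight parity of $(a_1,b_1,c_1,d_1)$, a step the paper asserts without proof (and with a typo, writing $a_2,b_2,c_2,d_2$ where it means $a_1,b_1,c_1,d_1$).
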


\subsection{Comparison}\label{FIRSTCONCL}
%Suppose that we have exponent matrix $E(\H)$ with circulant size $2q$, having $y$ exponent chains of length $4$, which don't form a cycle, and $x$ exponent chains of length $4$, which form a cycle. By $EC_{fl}$ ($EC_{mod}$) denote the mathematical expectation of the number of cycles in the exponent matrix with circulant size $q$ obtained after we apply floor lifting (modulo lifting).
Now summarize the results from Sections~\ref{FLOOR} and~\ref{MODULO} in the following
\begin{theorem} 
Suppose that in exponent matrix $E(\H)$ with circulant size $2q$ we have $y$ exponent chains of length $4$, which do not form a cycle, and $x$ exponent chains of length $4$, which form a cycle. Then mathematical expectations $EC_{fl}$ ($EC_{mod}$) of the number of cycles after floor lifting (modulo lifting) for circulant size $q$ are as follows:
$$
EC_{fl}=\frac{3}{4}x+\frac{5}{4(2q-1)}y,\quad
EC_{mod}=x+\frac{1}{(2q-1)}y.
$$
\end{theorem}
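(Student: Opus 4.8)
The plan is to reduce the statement to the two propositions already proved, via linearity of expectation. For each of the $x+y$ length-$4$ exponent chains in $E(\H)$, I would introduce an indicator random variable equal to $1$ when that chain becomes a cycle after the lifting and $0$ otherwise. The number of length-$4$ cycles after lifting is the sum of these indicators, so by linearity its expectation equals the sum over all chains of the probability that the chain turns into a cycle.

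First I would split the chains into the $x$ chains satisfying $\C_0$ (already a cycle in $\H$) and the $y$ chains satisfying $\overline{\C_0}$ (not a cycle). For floor lifting, Proposition~\ref{pr::floorProb} assigns probability $3/4$ to each chain in the first group and $5/(4(2q-1))$ to each chain in the second; for modulo lifting, Proposition~\ref{pr::moduloProb} assigns $1$ and $1/(2q-1)$ respectively. Summing the expected contributions over the two groups yields
$$
EC_{fl}=\tfrac{3}{4}x+\tfrac{5}{4(2q-1)}y,\qquad EC_{mod}=x+\tfrac{1}{2q-1}y,
$$
which are precisely the stated formulas.

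The step worth highlighting is that linearity of expectation requires no independence assumption among the chains. This matters because two distinct length-$4$ chains in the same exponent matrix generally share one or more entries of $E(\H)$, so their cycle-indicators are correlated; nevertheless the expectation of the sum remains the sum of the expectations. I do not expect any genuine obstacle here: once one accepts the uniform-and-independent entry model of Section~\ref{LIFTING} under which the two propositions were derived, the per-chain probabilities transfer verbatim and the summation is immediate. The only care needed is the bookkeeping of which of the two conditional probabilities attaches to each group.
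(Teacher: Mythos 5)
Your proposal is correct and matches the paper's (implicit) argument: the paper offers no separate proof, simply presenting the theorem as a summary of Propositions~2 and~3 combined by linearity of expectation over the $x+y$ chains, exactly as you do. Your added observation that linearity requires no independence among the chains' indicators is a worthwhile clarification but does not change the route.
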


Note that $EC_{fl}\geq EC_{mod}$ when $y\geq (2q-1)x$. Since usually we try to eliminate short cycles in matrix $E(\H)$, the number $y$ is likely to be much greater than $(2q-1)x$. So, we can conclude that modulo lifting is better than floor lifting with respect to  the number of short cycles.

\section{FLOOR SCALE MODULO LIFTING OF QC-LDPC CODES}\label{OURLIFT}
Now we introduce the proposed lifting method which we call \textbf{floor scale modulo lifting}:
\beq{FloorScalemoduloLifting}
E_{ij}(H_{k} )=
\begin{cases}
-1,\quad E_{ij}(H_{0})= -1, \\
\left\lfloor \frac{L_{k} }{L_{0} } ((r \times E_{ij}(H_{0})) \mod L_{0})\right\rfloor, &\text{otherwise,}
\end{cases} 
\eeq
where special parameter $r$ is called a \textit{scale value}. 

Define $A(r)$:
\begin{center}
$A(r)=\left[\begin{array}{cc} {a(r)} & {b(r)}\\ {c(r)} & {d(r)} \end{array}\right]$,
\end{center}
where $$a(r) \equiv r a \mod 2q, \quad b(r) \equiv r b \mod 2q,
$$ 
$$
c(r) \equiv r c \mod 2q, \quad d(r) \equiv r d \mod 2q.
$$
By $\C_0(r)$ denote the event:
``the exponent chain with exponent values $a(r)$, $b(r)$, $c(r)$ and $d(r)$ forms a cycle''.
Notice that for $r$ coprime with $2q$, i.e. $(r, 2q)=1$, elements of matrix $A(r)$ have the same distribution as matrix $A$. Moreover, exponent chains from matrices $A$ and $A(r)$ form a cycle simultaneously.
Let $a = 2a_1 +a_2$, $b = 2b_1 + b_2$, $c = 2c_1 + c_2$ and $d = 2d_1 + d_2$, where $a_2, b_2, c_2, d_2\in \{0,1\}$. Suppose we use floor scale modulo lifting for $L_1 = q$ with scale value $r=2t+1$, $0<r<2q$, which is coprime with $2q$. Then we obtain matrix $B(r)$:
\begin{center}
$B(r)=\left[\begin{array}{cc} {a'(r)} & {b'(r)}\\ {c'(r)} & {d'(r)} \end{array}\right]$,
\end{center}
where 
$$
a'(r)=\left\lfloor \frac{2a_1 r+a_2 r}{2} \right\rfloor \equiv a_1r + a_2 t \mod q.
$$
Other values $b'(r)$, $c'(r)$ and $d'(r)$ are represented in the same way. By $\C_1(r)$ denote the event:
``the exponent chain with exponent values $a'(r)$, $b'(r)$, $c'(r)$ and $d'(r)$ forms a cycle''. One can see that
$$
\Pr(\C_1(r)\mid \C_0) =\Pr(\C_1(1)\mid \C_0) =  \frac{3}{4}.
$$
Moreover 
$$
\C_1(r)\cap \C_0 = \C_1(1)\cap \C_0.
$$
\begin{proposition}\label{pr::twoRs}
Let $r_1$, $r_2$ be two distinct integers, such that $0<r_1, r_2<2q$, $(r_1, 2q)=1$, $(r_2, 2q)=1$ and $r_1\not\equiv r_2(q+1) \mod 2q$.
Then
$$
\Pr(\C_1(r_1)\cap \C_1(r_2) \cap \overline{\C_0})=0.
$$
\end{proposition}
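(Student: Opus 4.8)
The plan is to show that the three events $\C_1(r_1)$, $\C_1(r_2)$ and $\overline{\C_0}$ are jointly impossible for the prescribed $r_1,r_2$; since an impossible event has probability $0$, this proves the claim. Throughout I would write $S_1 = a_1-b_1-c_1+d_1$, $S_2 = a_2-b_2-c_2+d_2$ and $W = a-b-c+d = 2S_1+S_2$, so that $\C_0$ is precisely $W\equiv 0 \pmod{2q}$. Using the relation $a'(r)\equiv a_1 r + a_2 t \pmod q$ established above (and its analogues for $b',c',d'$), the event $\C_1(r)$ with $r=2t+1$ reads $rS_1 + tS_2 \equiv 0 \pmod q$.

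The first and decisive step is to recast this mod-$q$ condition as a cleaner mod-$2q$ identity. Multiplying $rS_1+tS_2\equiv 0 \pmod q$ by $2$ and substituting $2t = r-1$ gives $r(2S_1+S_2)-S_2\equiv 0 \pmod{2q}$, that is,
$$\C_1(r)\iff rW \equiv S_2 \pmod{2q}.$$
(The doubling is an equivalence because $2x\equiv 0\pmod{2q}$ iff $x\equiv 0\pmod q$.) Spotting this reformulation is the main obstacle: reasoning directly modulo $q$ forces a messy case analysis on $\gcd(t_1-t_2,q)$, whereas the mod-$2q$ form makes each $r_i$ a unit and lets me compare the two events at once.

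Applying the identity to $r_1$ and $r_2$ yields $r_1 W\equiv S_2$ and $r_2 W\equiv S_2 \pmod{2q}$. Multiplying the first by $r_2$, the second by $r_1$, and subtracting eliminates $W$ and gives $(r_1-r_2)S_2\equiv 0 \pmod{2q}$. I would first dispose of $S_2=0$: then $r_1 W\equiv 0$ forces $W\equiv 0\pmod{2q}$, i.e. $\C_0$, contradicting $\overline{\C_0}$; hence $S_2\in\{\pm1,\pm2\}$. Writing $r_1-r_2 = 2(t_1-t_2) = 2\delta$, the relation becomes $\delta S_2\equiv 0\pmod q$. Since $r_1,r_2$ are distinct odd integers in $(0,2q)$ we have $0<|\delta|\le q-1$, so $0<|\delta S_2|\le 2(q-1)<2q$; the only nonzero multiple of $q$ in this range is $\pm q$, forcing $|\delta|\,|S_2| = q$.

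The two surviving subcases are then closed using the hypothesis. If $|S_2|=1$ then $|\delta|=q$, contradicting $|\delta|\le q-1$. If $|S_2|=2$ then $|\delta|=q/2$, whence $r_1-r_2 = 2\delta \equiv q \pmod{2q}$. But the assumption $r_1\not\equiv r_2(q+1)\pmod{2q}$ is exactly $r_1-r_2\not\equiv q\pmod{2q}$, since $r_2(q+1) = r_2 q + r_2 \equiv q + r_2 \pmod{2q}$ (because $r_2$ is odd). This is the precise point at which the congruence restriction on $r_1,r_2$ enters, and it eliminates the last subcase, completing the contradiction and hence the proof.
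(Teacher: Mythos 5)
Your proof is correct, and it reaches the paper's conclusion (the set inclusion $\C_1(r_1)\cap\C_1(r_2)\subseteq\C_0$) by a noticeably different route. The paper first multiplies the whole configuration by $u_1=r_1^{-1}\bmod 2q$, using the facts that $A\mapsto A(u_1)$ preserves the distribution, fixes $\C_0$, and carries $\C_1(r)$ to $\C_1(ru_1)$; this normalizes one scale value to $1$, so that $\C_1(1)$ reads $S_1\equiv 0\pmod q$ and subtracting the second condition leaves $t_2'S_2\equiv 0\pmod q$, which is then killed by the bounds $|S_2|\le 2$, $1\le t_2'\le q-1$, $t_2'\ne q/2$. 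You instead keep both scale values, lift each condition $rS_1+tS_2\equiv 0\pmod q$ to the equivalent mod-$2q$ statement $rW\equiv S_2\pmod{2q}$, and eliminate $W$ by cross-multiplying, arriving at $(r_1-r_2)S_2\equiv 0\pmod{2q}$ before running an analogous bounding argument on $\delta S_2$. What your version buys: it is symmetric in $r_1,r_2$, purely deterministic (no appeal to the measure-preserving change of variables, which the paper invokes somewhat tersely), and it makes explicit that the hypothesis $r_1\not\equiv r_2(q+1)\pmod{2q}$ is exactly the exclusion $r_1-r_2\not\equiv q\pmod{2q}$, which also transparently explains why the condition is vacuous for odd $q$ (then $|\delta|=q/2$ is not an integer), matching the paper's Remark 1. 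What the paper's normalization buys is a direct reuse of the already-computed floor-lifting case $\C_1(1)$ and a one-line reduction of the two-parameter problem to a single residue $r_2u_1$.
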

In other words, for any scale values $r_1$ and $r_2$ fulfilled the condition of Proposition~\ref{pr::twoRs} if the start exponent chain in the matrix $A$ does not form a cycle then at least one exponent chain in the matrices $B(r_1)$ and $B(r_2)$ does not form a cycle too.
\begin{proof}
Let $u_1$ be such integer that $u_1r_1\equiv 1 \mod 2q$.  Note that $\C_0(u_1)=\C_0$. Therefore, 
$$
\Pr(\C_1(r_1)\cap \C_1(r_2) \cap \overline{\C_0})=\Pr(\C_1(r_1u_1)\cap \C_1(r_2u_1) \cap \overline{\C_0})=0.
$$
Assume events $\C_1(r_1u_1)$ and $\C_1(r_2u_1)$ occur. Thus, $a_1-b_1-c_1+d_1\equiv 0 \mod q$
and 
$$
(a_1-b_1-c_1+d_1)r'_2+(a_2-b_2-c_2+d_2)t'_2\equiv 0 \mod q,
$$
where 
$$
1+2t'_2=r'_2\equiv r_2u_1 \mod q, \quad 0<r'_2<2q.
$$
From  
$$
(a_2-b_2-c_2+d_2)t'_2\equiv 0 \mod q, 
$$
$$
(a_2-b_2-c_2+d_2)\in[-2,2], \quad t'_2\in[1, q-1]
$$ and $2t'_2+1\ne q+1$ it follows that $a_2-b_2-c_2+d_2=0$. Hence $(a_1-b_1-c_1+d_1)\equiv 0\mod q$ 
and 
\begin{multline*}
2(a_1-b_1-c_1+d_1)+(a_2-b_2-c_2+d_2) \\
=a-b-c+d\equiv 0 \mod 2q,
\end{multline*}
i.e., we prove that $\C_1(r_1)\cap \C_1(r_2)\Rightarrow C_0$.
\end{proof}

\begin{remark}
Note that if $r_1\equiv r_2(q+1)\mod 2q$, then $r_2\equiv r_1(q+1)\mod 2q$. 
Therefore, we can choose a set $R$ of scale values of cardinality $\varphi(2q)/2$ ($\varphi(n)$ is Euler's totient function)  for even $q$ and $\varphi(2q)$ for odd $q$, such that for every $r_1, r_2\in R$ the conditions of Proposition~\ref{pr::twoRs} are fulfilled.
\end{remark}

Consider a floor scale modulo lifting with a family $R=\{r_1,r_2,\ldots, r_{N_r}\}$ of $N_r$ scale values, such that for any two scale values $r_i,\, r_j\in R$ the conditions of Proposition~\ref{pr::twoRs} are satisfied. Let $D=(D_{ij})$ be an $N_r\times y$ matrix, where the $i$-th row corresponds to scale values $r_i\in R$, and each column corresponds to one exponent chain  of length $4$ in $E(\H)$. We set $D_{ij}$ to $1$ if the $j$-th exponent chain forms a cycle  after floor scale modulo lifting with scale value $r_i$, and to $0$ otherwise. The first $x$ columns, which corresponds to cycles in exponent matrix with circulant size $2q$, equal to the column of ones with probability $3/4$ and to the column of zeros with probability $1/4$. The rest $y$ columns equal to the column of zeros with probability 
$$
1-N_r p_{fl} = 1-\frac{5N_r}{4(2q-1)}
$$
and to the column of weight 1 with one at position $i$ with probability $p_{fl} =5/(4(2q-1))$
for each $i\in [1, N_r]$. Let $X_i$ be equal to the number of ones in the $i$-th row. We are interested in the minimum number of cycles $\min(X_1, X_2, \ldots, X_{N_r})$. For further calculations
we assume that all columns of matrix $D$ are chosen independently. Under this assumption exact formulas for the mathematical expectation $EC_{fsml}(N_r)= 3x/4  + \E\min(X_1, X_2, \ldots, X_{N_r})$ could be easily written out, but they rather messy. We provide only formula for the case $N_r=2$ in the form of 
\begin{proposition}[\cite{puv17}]\label{pr::floorProb}
Suppose we have an exponent matrix $E(\H)$ with circulant size $2q$ having $x$ exponent chains of length $4$, which form a cycle in $\H$, and $y$ exponent chains of length $4$, which do not form a cycle. Then the mathematical expectation  $EC_{fsml}(2)$ of the number of cycles of length $4$ in the parity-check matrix of circulant size $q$ obtained after floor scale modulo lifting with $N_r=2$ scale values, which satisfies the conditions of Proposition~\ref{pr::twoRs}, is described by the following expression
\begin{multline*}
EC_{fsml}(2)=\frac{3}{4}x+\sum\limits_{n=0}^{y}\frac{n}{2}\(1-\frac{\binom{n}{\lfloor\frac{n}{2}\rfloor}}{2^n}\)\\
\times(2p_{fl})^n(1-2p_{fl})^{y-n}\binom{y}{n}.
\end{multline*}
\end{proposition}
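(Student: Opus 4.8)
The plan is to separate the contributions of the $x$ cycle-forming chains and the $y$ non-cycle chains to the number of length-$4$ cycles after the lifting, and to treat the two groups independently. For a cycle-forming chain the identity $\C_1(r)\cap\C_0=\C_1(1)\cap\C_0$ shows that, conditioned on $\C_0$, the event $\C_1(r)$ coincides with $\C_1(1)$ for \emph{every} admissible scale value; hence such a chain becomes a cycle under both $r_1$ and $r_2$ simultaneously or under neither, so each of the first $x$ columns of $D$ is all-ones with probability $3/4$ and all-zeros with probability $1/4$. These columns add a common random term $Z$, with $\E Z=\tfrac34 x$, to both row sums. Writing $X_1,X_2$ for the numbers of ones contributed by the $y$ non-cycle columns to rows $1$ and $2$, the minimum row sum equals $Z+\min(X_1,X_2)$, so $EC_{fsml}(2)=\tfrac34 x+\E\min(X_1,X_2)$ and it remains to evaluate $\E\min(X_1,X_2)$.

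I would next pin down the law of a single non-cycle column. Conditioned on $\overline{\C_0}$, Proposition~\ref{pr::twoRs} forbids $\C_1(r_1)$ and $\C_1(r_2)$ from occurring together, so every such column has weight at most $1$; and since $A(r)$ has the same distribution as $A$ whenever $(r,2q)=1$, the floor-lifting computation of Section~\ref{FLOOR} gives $\Pr(\C_1(r_i)\mid\overline{\C_0})=p_{fl}$ for each $i$. Under the stated independence assumption, each of the $y$ columns is therefore independently all-zeros with probability $1-2p_{fl}$, carries its single $1$ in row $1$ with probability $p_{fl}$, or in row $2$ with probability $p_{fl}$ (note $2p_{fl}\le 1$ for $q>2$). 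Conditioning on the number $n$ of non-zero columns, which is binomial with parameters $y$ and $2p_{fl}$ and supplies the factor $\binom{y}{n}(2p_{fl})^n(1-2p_{fl})^{y-n}$, the two row positions are symmetric; hence the count $K$ falling in row $1$ is binomial with parameters $n$ and $1/2$, with $X_1=K$ and $X_2=n-K$. This yields $\E\min(X_1,X_2)=\sum_{n=0}^{y}\binom{y}{n}(2p_{fl})^n(1-2p_{fl})^{y-n}\,\E\min(K,n-K)$.

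The crux is the inner expectation $\E\min(K,n-K)$ for a symmetric binomial $K$ on $n$ trials. Rewriting $\min(K,n-K)=\tfrac{n}{2}-\tfrac12|2K-n|$ reduces it to the mean absolute deviation $\E|2K-n|=2^{-n}\sum_{k}|2k-n|\binom{n}{k}$, which one collapses to a single central binomial coefficient by the classical evaluation $\E|2K-n|=n\binom{n-1}{\lfloor(n-1)/2\rfloor}/2^{n-1}$, giving $\E\min(K,n-K)=\tfrac{n}{2}\bigl(1-\binom{n-1}{\lfloor(n-1)/2\rfloor}/2^{n-1}\bigr)$. Substituting into the sum and rewriting the coefficient in the central-binomial form of the statement then completes the argument. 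I expect this evaluation and simplification of the mean absolute deviation to be the main obstacle: recasting $\binom{n-1}{\lfloor(n-1)/2\rfloor}/2^{n-1}$ as $\binom{n}{\lfloor n/2\rfloor}/2^{n}$ is immediate for even $n=2m$ through $\binom{2m}{m}=2\binom{2m-1}{m-1}$, but must be handled with care for odd $n$, and one should also check the boundary cases $n=0,1$. Everything else is routine assembly of independent binomial factors.
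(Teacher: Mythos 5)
Your argument follows the same route as the paper's own proof: split off the $x$ cycle columns (contributing $\tfrac{3}{4}x$), condition on the number $n$ of nonzero columns among the remaining $y$ (binomial with parameters $y$ and $2p_{fl}$), observe that given $n$ the allocation between the two rows is a symmetric binomial $Y\sim B(n,\tfrac12)$, and evaluate $\E\min(Y,n-Y)$. The paper merely asserts the closed form $\E\min(Y,n-Y)=\frac{n}{2}\bigl(1-\binom{n}{\lfloor n/2\rfloor}/2^n\bigr)$ and does not spell out why each non-cycle column has weight at most one; you justify both (the latter via Proposition~\ref{pr::twoRs}), which is more careful than the original.

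The one step you cannot actually complete is the final ``recasting,'' and this is worth stating plainly rather than as something to be handled with care: your mean-absolute-deviation computation correctly yields $\E\min(Y,n-Y)=\frac{n}{2}\bigl(1-\binom{n-1}{\lfloor(n-1)/2\rfloor}/2^{n-1}\bigr)$, and this equals $\frac{n}{2}\bigl(1-\binom{n}{\lfloor n/2\rfloor}/2^{n}\bigr)$ only for even $n$. For $n=3$ your formula gives $3/4$, which a direct check confirms ($\min(Y,3-Y)$ is $1$ with probability $3/4$ and $0$ otherwise), whereas the form appearing in the proposition gives $15/16$. So your derivation is right and the closed form printed in the statement (and asserted without proof in the paper's appendix) is incorrect for the odd-$n$ terms of the sum; those terms should carry the factor $\binom{n-1}{(n-1)/2}/2^{n-1}$. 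You should present this as a correction to the stated identity instead of attempting to force the match.
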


The proof of Proposition~\ref{pr::floorProb} is provided in the full version of the given paper~\cite{puv17}.

If $y\to\infty$ the asymptotic behavior of $EC_{fsml}(N_r)$ is given by
\begin{theorem}[\cite{puv17}]\label{th::asymManyCycles}
The mathematical expectation of the number of cycles of length 4 after floor scale modulo lifting has the  following asymptotic form
$$
EC_{fsml}=\frac{3}{4}x+p_{fl}y-c_{N_r}\sqrt{y}+o(\sqrt{y}), \text{ if }y\to\infty, 
$$
where $c_{N_r}$ does not depend on $y$. 
%For example, $c_2=1/\sqrt{\pi}\approx 0.56$, $c_3=1.5/\sqrt{\pi}\approx 0.84$, $c_4\approx 1.02$, $c_5\approx 1.13$.
\end{theorem}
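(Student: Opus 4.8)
The plan is to reduce the claim to an asymptotic estimate for the expected minimum of the row-sums of $D$, and then extract the $\sqrt{y}$-order term from a multivariate central limit theorem. First I would separate the contribution of the $x$ cycle-columns from that of the $y$ non-cycle columns. Since each cycle-column is, by the model, either all-ones or all-zeros, it contributes the \emph{same} amount to every row. Hence if $Z$ denotes the number of cycle-columns that become all-ones and $X_i$ denotes the number of ones contributed to row $i$ by the non-cycle columns only, then the total number of $4$-cycles after lifting with scale value $r_i$ equals $Z+X_i$, and
$$
\min_i(Z+X_i)=Z+\min_i X_i .
$$
Because each cycle-column is all-ones with probability $3/4$ independently of everything else, $\E Z=\tfrac34 x$ and $Z$ is independent of $(X_1,\ldots,X_{N_r})$, so that $EC_{fsml}(N_r)=\tfrac34 x+\E\min_i X_i$. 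This isolates the quantity whose asymptotics drive the theorem.

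Next I would identify the joint law of $(X_1,\ldots,X_{N_r})$. By construction each of the $y$ non-cycle columns independently falls into exactly one of $N_r+1$ categories: it puts a $1$ in row $i$ with probability $p_{fl}$ for each $i\in[1,N_r]$, or contributes nothing with probability $1-N_r p_{fl}$. Thus $(X_1,\ldots,X_{N_r})$ is multinomial with $y$ trials and cell probabilities $p_{fl}$, giving $\E X_i=y p_{fl}$, $\mathrm{Var}(X_i)=y p_{fl}(1-p_{fl})$ and $\mathrm{Cov}(X_i,X_j)=-y p_{fl}^2$ for $i\ne j$. The multivariate CLT then yields, as $y\to\infty$,
$$
\frac{1}{\sqrt{y}}\(X_1-y p_{fl},\ldots,X_{N_r}-y p_{fl}\)\Rightarrow G=(G_1,\ldots,G_{N_r}),
$$
where $G$ is a centered Gaussian vector with covariance $\Sigma_{ii}=p_{fl}(1-p_{fl})$ and $\Sigma_{ij}=-p_{fl}^2$. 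Note that $\Sigma$ depends only on $q$ and $N_r$, not on $y$.

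Finally, since $(t_1,\ldots,t_{N_r})\mapsto\min_i t_i$ is continuous, the continuous mapping theorem gives $\tfrac{1}{\sqrt{y}}\(\min_i X_i-y p_{fl}\)\Rightarrow\min_i G_i$. Upgrading this convergence in distribution to convergence of expectations is the step I expect to be the main obstacle, and I would handle it by a uniform-integrability argument: using the elementary bound $(\min_i X_i-y p_{fl})^2=\(\min_i(X_i-y p_{fl})\)^2\le\sum_i(X_i-y p_{fl})^2$ together with $\E[(X_i-y p_{fl})^2]=y p_{fl}(1-p_{fl})$, the rescaled variables $\{(\min_i X_i-y p_{fl})/\sqrt{y}\}_y$ have second moments bounded by $N_r p_{fl}(1-p_{fl})$ uniformly in $y$, hence are uniformly integrable. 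Granting this, we obtain $\E\min_i X_i=y p_{fl}+\sqrt{y}\,\E[\min_i G_i]+o(\sqrt{y})$. Setting $c_{N_r}\eq-\E[\min_i G_i]$ — which is nonnegative because $\min_i G_i\le G_1$ forces $\E[\min_i G_i]\le\E G_1=0$, and which depends only on $N_r$ and $\Sigma$ but not on $y$ — and combining with the reduction from the first step gives the claimed expansion $EC_{fsml}=\tfrac34 x+p_{fl}y-c_{N_r}\sqrt{y}+o(\sqrt{y})$.
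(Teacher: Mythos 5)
Your proof is correct and follows essentially the same route as the paper's: identify $(X_1,\ldots,X_{N_r})$ as (the marginal of) a multinomial vector, apply the multivariate CLT, and set $c_{N_r}=-\E\min_i G_i$. The only differences are technical: you work directly with the $N_r$ relevant coordinates, so you never need the paper's Chernoff estimate showing that the auxiliary coordinate $X_{N_r+1}=y-\sum_i X_i$ is almost never the minimum, and you upgrade convergence in distribution to convergence of expectations via uniform integrability from a second-moment bound, which is a clean and fully rigorous substitute for the paper's truncation argument with $\min_n$.
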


Let us consider another scenario. Suppose that the number of cycles of length 4 in matrix $\H$ with circulant size $L_0=2q$ is equal to $0$, and the number $y$ of exponent chains is fixed. Now we are interested in the probability that after lifting for the circulant size $L_1=q$ we will not obtain any cycle of length 4. We again assume that all events $C_1$ are independent for all exponent chains, i.e., all columns of matrix $D$ are chosen independently.

\begin{theorem}[\cite{puv17}]\label{th::asymBigCirc}
The probability of the absence of cycles of length $4$ in the parity-check matrix with circulant size $q$ obtained after modulo lifting, floor lifting and floor scale modulo lifting is as follows
$$
P_{mod} = (1-p_{mod})^y=1-yp_{mod}+O(q^{-2}), \quad q\to\infty
$$
$$
P_{fl} = (1-p_{fl})^y=1-yp_{fl}+O(q^{-2}), \quad q\to\infty
$$
%For floor scale modulo lifting this probability can be computed with the help of inclusion-exclusion principle
\begin{multline*}
P_{fsml}(N_r)=\sum\limits_{k=1}^{N_r}(-1)^{k-1}\binom{N_r}{k}(1-kp_{fl})^y\\
=\begin{cases}
1-O(q^{-N_r}), \quad\text{if }y\ge N_r,\,q\to\infty,
\\
1, \quad\text{if }y< N_r,\,q\to\infty.
\end{cases}
\end{multline*}
\end{theorem}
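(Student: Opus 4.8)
The plan is to treat the three lifting rules separately, dispatching the modulo and floor cases quickly and concentrating the effort on the floor scale modulo case. Since $\H$ is assumed to contain no cycle of length $4$, every one of the $y$ exponent chains falls under the event $\overline{\C_0}$. For modulo lifting, Proposition~\ref{pr::moduloProb} tells us each such chain independently becomes a cycle with probability $p_{mod}=1/(2q-1)$; under the stated independence assumption the probability that none of the $y$ chains becomes a cycle is therefore $(1-p_{mod})^y$, and since $p_{mod}=O(q^{-1})$ with $y$ fixed, the binomial expansion $(1-p_{mod})^y = 1 - y p_{mod} + \binom{y}{2}p_{mod}^2 - \cdots$ immediately yields $1 - y p_{mod} + O(q^{-2})$. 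The floor case is word-for-word identical with $p_{fl}=5/(4(2q-1))$ in place of $p_{mod}$.

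For floor scale modulo lifting I would first pin down the exact formula by inclusion–exclusion. Writing $X_i$ for the number of the $y$ non-cycle chains that turn into cycles under scale value $r_i$, the quantity $P_{fsml}(N_r)$ is $\Pr(\bigcup_{i=1}^{N_r}\{X_i=0\})$, the probability that at least one of the $N_r$ candidate matrices is cycle-free. The crucial structural input is Proposition~\ref{pr::twoRs}: for scale values meeting its hypotheses, a chain that is not a cycle to begin with cannot simultaneously become a cycle under two distinct scale values, so each column of the matrix $D$ restricted to the $y$ non-cycle chains has weight at most one. Hence for any index set $S$ with $|S|=k$ a single chain avoids creating a cycle for every $r_i$ with $i\in S$ with probability exactly $1-k p_{fl}$, and by independence across chains $\Pr(\bigcap_{i\in S}\{X_i=0\})=(1-k p_{fl})^y$. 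Summing over the $\binom{N_r}{k}$ sets of each size $k$ via inclusion–exclusion gives precisely $P_{fsml}(N_r)=\sum_{k=1}^{N_r}(-1)^{k-1}\binom{N_r}{k}(1-k p_{fl})^y$.

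The asymptotic evaluation is where the real work lies. I would expand each $(1-k p_{fl})^y$ by the binomial theorem, interchange the two summations, and collect the coefficient of $p_{fl}^j$ into $S_j=\sum_{k=1}^{N_r}(-1)^{k-1}\binom{N_r}{k}k^j$. The key is the classical finite-difference identity $\sum_{k=0}^{N_r}(-1)^k\binom{N_r}{k}k^j=(-1)^{N_r}N_r!\,S(j,N_r)$, where $S(j,N_r)$ is a Stirling number of the second kind; since the $k=0$ term vanishes for $j\ge1$, this forces $S_0=1$, $S_j=0$ for $1\le j\le N_r-1$, and $S_{N_r}=(-1)^{N_r+1}N_r!$. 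Thus every power $p_{fl}^j$ with $1\le j\le N_r-1$ cancels, the constant term is $1$, and the first surviving correction is the $j=N_r$ term, equal to $-y(y-1)\cdots(y-N_r+1)\,p_{fl}^{N_r}$, which is $O(q^{-N_r})$ because $p_{fl}=O(q^{-1})$. When $y\ge N_r$ this leading correction has a nonzero coefficient, giving $1-O(q^{-N_r})$; when $y<N_r$ the binomial coefficients $\binom{y}{j}$ kill all terms with $j>y$, leaving only the vanishing ones and the constant $1$, so $P_{fsml}(N_r)=1$ exactly, in agreement with the pigeonhole reading that with more scale values than chains some scale value is missed by every chain.

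The main obstacle I anticipate is upgrading the ``weight at most one'' claim cleanly from a pair to an arbitrary family $R$: Proposition~\ref{pr::twoRs} is stated for two scale values, so I must argue that pairwise emptiness of the events $\C_1(r_i)\cap\overline{\C_0}$ indeed implies that, for a fixed non-cycle chain, at most one of them can hold, and that this is exactly what makes $\Pr(\bigcap_{i\in S}\{X_i=0\})$ factor with the clean exponent $1-k p_{fl}$. Once that structural fact is secured, the remaining inclusion–exclusion bookkeeping and the Stirling-number cancellation are routine.
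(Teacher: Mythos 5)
Your proposal is correct and follows essentially the same route as the paper: the exact formula comes from inclusion--exclusion over the events $\{X_i=0\}$, with the exponent $1-kp_{fl}$ justified by the disjointness (weight-at-most-one columns of $D$) supplied by Proposition~\ref{pr::twoRs}, and the asymptotics come from expanding $(1-kp_{fl})^y$ in powers of $p_{fl}$ and observing that all coefficients of $p_{fl}^j$ with $1\le j\le N_r-1$ vanish. Your Stirling-number identity is exactly the finite-difference fact the paper isolates as Lemma~\ref{auxLemma} (that $\sum_{k=0}^{n}(-1)^k\binom{n}{k}g(k)=0$ for $\deg g<n$), and your explicit leading correction $-y(y-1)\cdots(y-N_r+1)p_{fl}^{N_r}$ is a small bonus beyond what the paper records.
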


In this case we see that floor scale modulo lifting is much better than modulo and floor lifting.

Table~\ref{BrokenCycles} shows one of possible advantages of the
proposed lifting approach.  We compare the floor lifting length adaption of QC-LDPC codes used in IEEE $802.16$ for  rate $1/2$ with the proposed floor scale modulo lifting. We apply the lifting methods to the $12 \times 24$ mother matrix. We have found optimal $r$ scale value for our lifting approach with respect to girth and number of exponent chains which form cycles of the minimal length. In Table~\ref{BrokenCycles} for each circulant size the optimal $r$ scale value, girth and the number of cycles are depicted. Note that the QC-LDPC code of  IEEE $802.16$ standard was optimized with considering floor lifting method. If the QC-LDPC code with the maximal length size is not optimized with considering floor or modulo lifting method, then the superiority of the proposed floor scale modulo lifting will be more conspicuous.
\begin{table}[ht]
\caption{Girth and the number of short cycles for floor lifting and floor scale modulo lifting}
\label{BrokenCycles}
\begin{center}
\begin{tabular}{|P{10pt}|P{10pt}|P{45pt}|P{10pt}|P{45pt}|}
\hline
\multicolumn{3}{|P{90pt}|}{Floor scale modulo lifting}   &
\multicolumn{2}{|P{70pt}|}{Floor lifting, $r=1$} \\
\hline
$E_{k}$& $r$ & girth / cycles& $E_{k}$& girth / cycles\\
\hline 24&  95& 6 / 13 & 24& 6 / 20 \\
\hline 28& 1& 4 / 1 & 28& 4 / 1 \\
\hline 32& 1& 6 / 11 & 32& 6 / 11 \\
\hline 36& 95& 6 / 7 & 36& 6 / 13 \\
\hline 40& 1& 6 / 7 & 40& 6 / 7 \\
\hline 44& 95& 6 / 5 & 44& 6 / 10 \\
\hline 48& 1& 6 / 7 & 48& 6 / 7 \\
\hline 52& 1& 6 / 6 & 52& 6 / 6 \\
\hline 56& 1& 6 / 5 & 56& 6 / 5 \\
\hline 60& 1& 6 / 6 & 60& 6 / 6 \\
\hline 64& 34& 6 / 5 & 64& 6 / 9 \\
\hline 68& 53& 6 / 4 & 68& 6 / 8 \\
\hline 72& 11& 6 / 6 & 72& 6 / 9 \\
\hline 76& 91& 6 / 4 & 76& 6 / 5 \\
\hline 80& 2& 6 / 5 & 80& 6 / 7 \\
\hline 84& 11& 6 / 3 & 84& 6 / 8 \\
\hline 88& 41& 6 / 3 & 88& 6 / 6 \\
\hline 92& 13&  6 / 4 &  92&  6 / 8 \\
\hline 96&  1&  6 / 5 &  96&  6 / 5 \\
\hline
\end{tabular}
\label{Number of minimal cycles in QC-LDPC codes lifted by floor scale and floor scale modulo lifting}
\end{center}
\end{table}
\section{SIMULATION RESULTS}\label{SIMULATION}
%\quad
%\subsection{Notations}
%\label{Notations}
QC-LDPC codes of smaller lengths can be obtained by lifting exponent matrix of  QC-LDPC codes of maximal length. Their performance over an AWGN channel with BPSK modulation was analyzed by computer simulations. 
Figure~\ref{Fig1} shows the frame error rate (FER) performance of rate 4 over 5 AR4JA code defined by protograph of size $3 \times 11 $  from~\cite{DDJ06}. We use native lifting for fixed circulant sizes $\{16, 32, 64, 128\}$ and floor modulo scale lifting  beginning from parity-check matrix $\H$ of circulant size $128$ which goes down to circulant sizes $\{16, 32, 64\}$. BP decoder with $100$ iterations is used.

Figure~\ref{Fig2} shows the SNR required to achieve $10^{-2}$ FER performance over an AWGN channel with QPSK modulation
%as the blocklength increases 
for 3 families of QC-LDPC codes with rate 8 over 9.
Families $A$~\cite{SAMd16} and $B$~\cite{QCM16} are the industrial state-of-art QC-LDPC codes with their own lifting. For family $C$ we applied floor modular scale lifting.
%obtained by floor scale modulo lifting compare to industrial state-of-art QC-LDPC with their own lifting, where code B describe in~\cite{QCM16}, code A in~\cite{SAMd16}, code C  use floor modular lifting, represented by curves from upper to below respectively. 
Layered normalized offset min-sum decoder with 15 iterations was used in simulations. Normalize and offset factors were optimized  to improve waterfall performance~\cite{CTanJL05}. In summary, the proposed lifting scheme supports fine granularity and avoids catastrophic cases for different lengths.
\begin{figure}
\centering
\includegraphics[width=83mm, viewport=30.00mm 198.12mm 194.93mm 277.00mm]{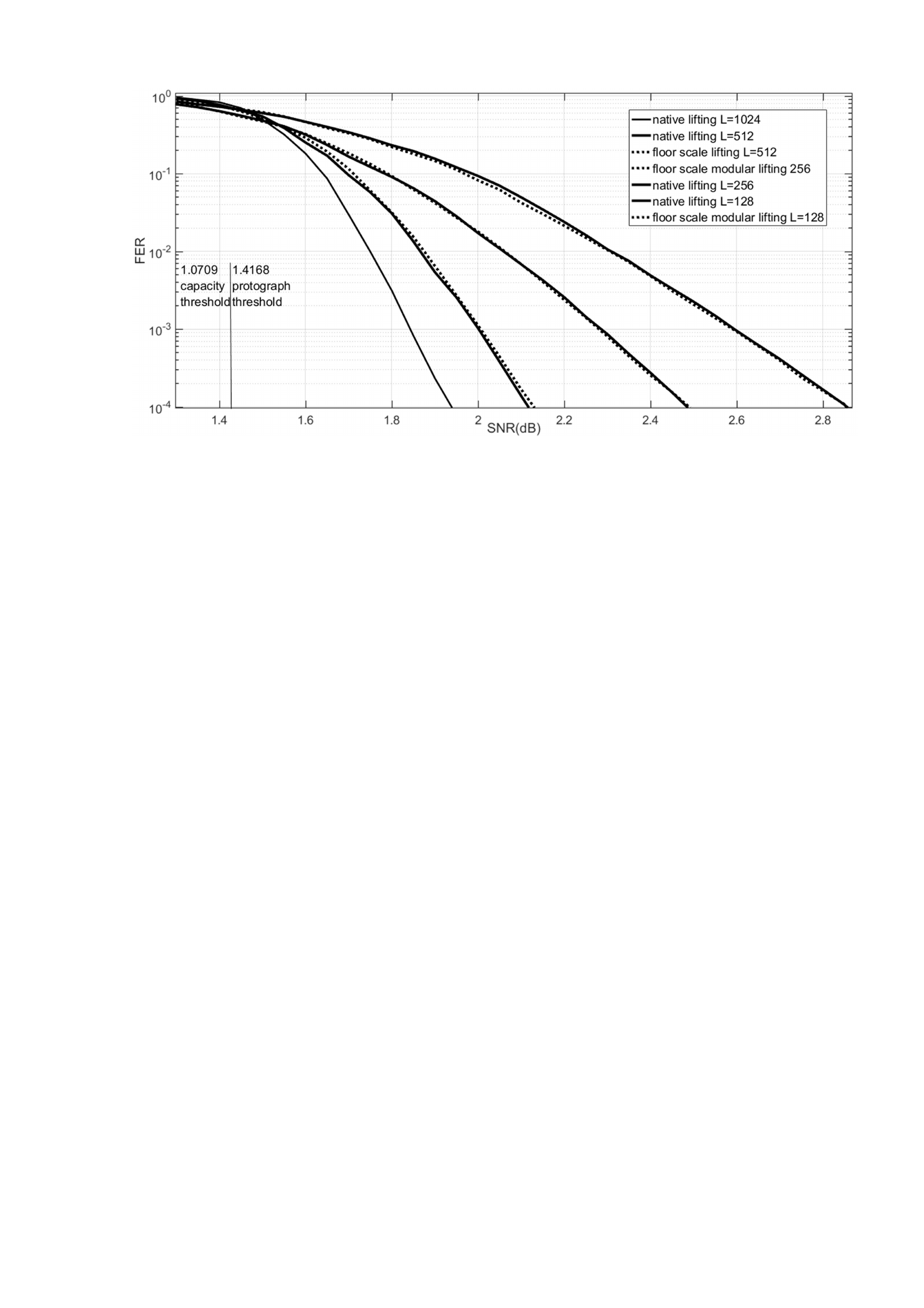} 
  \caption{\textsf{Performance of AR4JA codes with different CPM sizes by the lifting L=128, 256, 512. BP 100 it.}}
  \label{Fig1}
\end{figure}

\begin{figure}
\centering
\includegraphics[width=83mm, viewport=30.00mm 197.40mm 194.75mm 277.00mm]{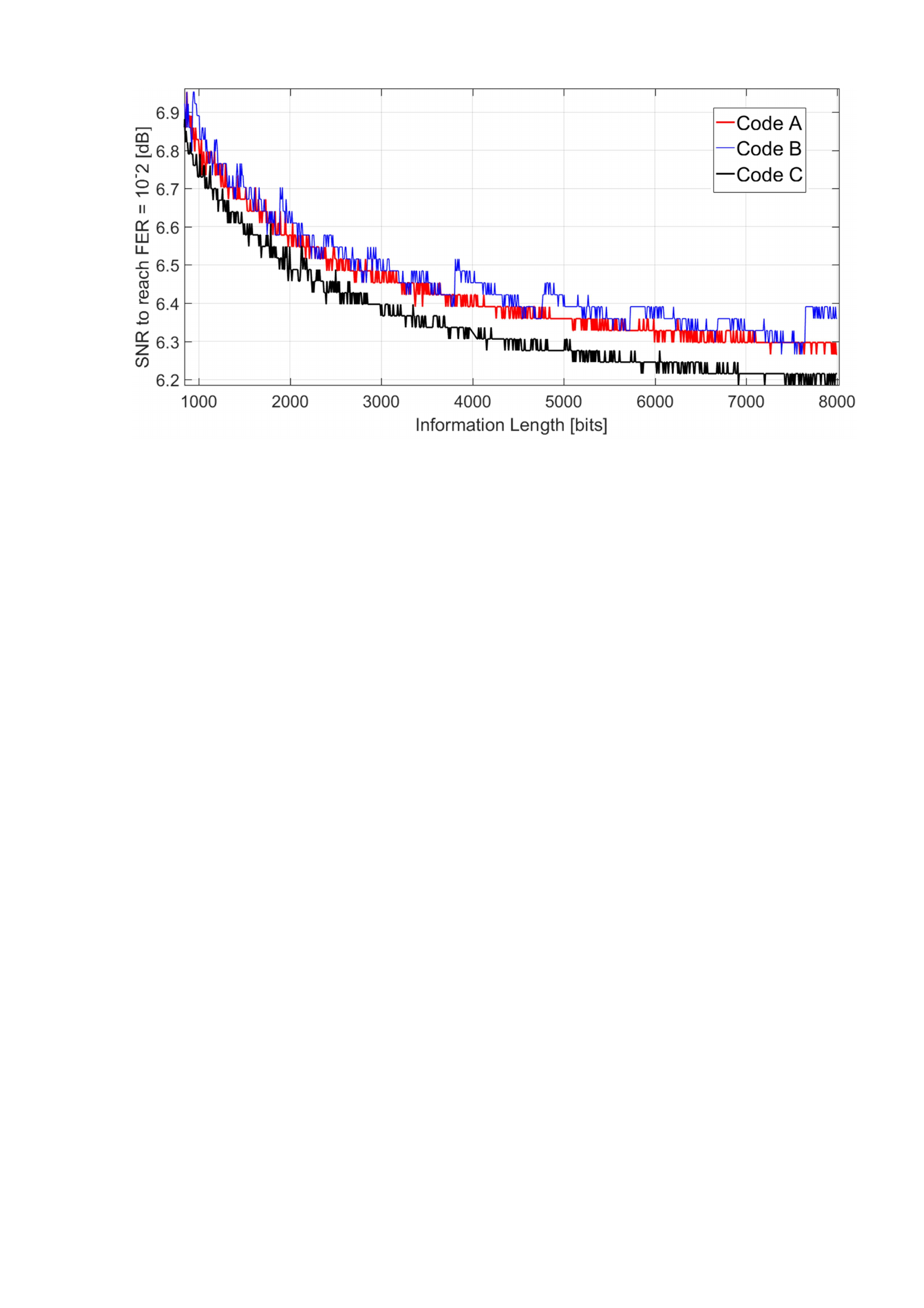} 
  \caption{\textsf{SNR required to achieve $10^{-2}$ FER}}
  \label{Fig2}
\end{figure}

\appendix\label{appendix}
\subsection{Proof of Proposition~\ref{pr::floorProb}}
\begin{proof}
Consider a random variable $\min(X_1,X_2)$. Then
\begin{multline*}
\E \min(X_1,X_2) = \E\,\E(\min(X_1,X_2)\mid X_1 + X_2)\\
=\sum_{n=0}^{y}\E\min(Y,n-Y)\Pr(X_1 + X_2 = n),
\end{multline*}
where $Y\sim B(n,\frac{1}{2})$. From
$$
\E\min(Y,n-Y) = \frac{n}{2}\(1-\frac{\binom{n}{\lfloor n/2 \rfloor}}{2^n}\)
$$
and
$$
\Pr(X_1 + X_2 = n) =(2p_{fl})^n(1-2p_{fl})^{y-n}\binom{y}{n},
$$
it follows the statement of the proposition.
\end{proof}
\subsection{Proof of Theorem~\ref{th::asymManyCycles}}
\begin{proof}
Consider a random vector 
$$
{\bf X}=(X_1, X_2, \ldots, X_{N_r}, X_{N_r+1}),
$$
where $X_{N_r+1} = y-\sum\limits_{i=1}^{N_r}X_i)$.

One can see that ${\bf X}$ has a multinomial distribution with 
$$
\Pr(x_1, \ldots, x_{N_r+1})=\frac{y!}{x_1!\cdot\ldots\cdot x_{N_r+1}!}p_{fl}^{\sum\limits_{i=0}^{N_r}x_i}(1-N_rp_{fl})^{x_{N_r+1}}.
$$
By the Central limit theorem the distribution of random vector $\frac{{\bf X}-\E {\bf X}}{\sqrt{y}}$ tends to normal distribution $\mathcal{N}(0, \Sigma)$ as $y\to\infty$, where $\Sigma$ is the covariance matrix of ${\bf X}$.

Let us prove that $\E \min(X_1, X_2, \ldots, X_{N_{r}}) - \E \min {\bf X} = o(\sqrt{y})$ as $y \to\infty$. 
\begin{multline}
|\E \min(X_1, X_2, \ldots, X_{N_{r}}) - \E \min {\bf X}|\\
\leq y\P(\min {\bf X} = X_{N_r+1})\leq y\P(X_{N_r+1}\leq y/({N_{r}+1}))\\
\leq
y\P\left(X_{N_r+1}\leq \E X_{N_r+1}\frac{1}{(1-p_{fl}N_r)(N_r+1)}\right)
\end{multline}

From $N_r\leq \varphi(2q)\leq q$ it follows that $1-p_{fl}N_r>p_{fl}$ for $q>2$. The following chain of inequalities takes place
\begin{multline}
1-p_{fl}N_r>p_{fl} \Rightarrow \\
p_{fl}(N_r+1)N_r\leq N_r<N_r+1\Rightarrow \\
\frac{1}{(1-p_{fl}N_r)(N_r+1)}<1.
\end{multline}
Denote $\frac{1}{(1-p_{fl}N_r)(N_r+1)}$ as $1-\delta$ for some $\delta>0$ and use the Chernoff inequality
$$
y\P\left(X_{N_r+1}\leq \E X_{N_r+1}(1-\delta)\right)\leq ye^{-\frac{\delta^2\E X_{N_r+1}}{2}}=o(\sqrt{y}).
$$

Let denote as  $\min_n({\bf X})$ the function which equals $\max(\min(\min({\bf X}), n), -n)$. This function is continuous and bounded, hence
\begin{multline}
\lim_{y\to\infty}\E\min\left(\frac{{\bf X}-\E {\bf X}}{\sqrt{y}}\right)=\\
\lim\limits_{n\to\infty}\lim_{y\to\infty}\E\min{}_n\frac{{\bf X}-\E {\bf X}}{\sqrt{y}}\\
=\lim\limits_{n\to\infty}\E\min{}_n\mathcal{N}(0, \Sigma)=\\
\E\min\mathcal{N}(0, \Sigma)=-c_{N_r}, 
\end{multline}
where $c_{N_r}=\E\max\mathcal{N}(0, \Sigma)>0.$
Therefore, 
\begin{multline}
\E \min(X_1, X_2, \ldots, X_{N_r}) = \E\min{\bf X} + o(\sqrt{y}) =\\
\min\E{\bf X}-\sqrt{y}c_{N_r}+o(\sqrt{y})=p_{fl}y-\sqrt{y}c_{N_r}+o(\sqrt{y}).
\end{multline}
\end{proof}

\subsection{Proof of Theorem~\ref{th::asymBigCirc}}
\begin{proof}
We prove the formula for $P_{fsml}$ only.  Firstly, find the probability $P_k$ that the first $k$ rows of the matrix $D$ contain only zeros
$$
P_k=(1-kp_{fl})^y.
$$
Secondly, using the inclusion-exclusion principle we get
\beq{pfsml}
P_{fsml} = \sum_{k=1}^{N_r}(-1)^{k-1}\binom{N_r}{k}(1-kp_{fl})^y.
\eeq
Now we prove an auxiliary statement.
\begin{lemma}\label{auxLemma}
The binomial identity
$$
\sum_{k=0}^n(-1)^k\binom{n}{k}g(k)=0
$$
holds for every polynomial $g(k)$ with degree less than $n$.
\end{lemma}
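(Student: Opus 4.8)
The plan is to recognize the alternating sum as an iterated finite difference and to exploit the fact that differencing strictly lowers the degree of a polynomial. First I would introduce the forward difference operator $\Delta$ defined by $\Delta f(k) = f(k+1) - f(k)$, and establish by a short induction on $n$ the standard expansion
$$
\Delta^n f(0) = \sum_{k=0}^n (-1)^{n-k}\binom{n}{k} f(k),
$$
so that $\sum_{k=0}^n (-1)^k \binom{n}{k} g(k) = (-1)^n \Delta^n g(0)$. The whole identity then reduces to showing that $\Delta^n g \equiv 0$ whenever $\deg g < n$.

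The key step is the observation that $\Delta$ decreases the degree of a polynomial by exactly one (and annihilates constants). Indeed, if $g(k)=\sum_{j} c_j k^j$ has degree $d\ge 1$, then the top term contributes $c_d((k+1)^d - k^d) = c_d\, d\, k^{d-1} + \cdots$ to $\Delta g$, whose leading coefficient $c_d d$ is nonzero, so $\deg \Delta g = d-1$. Iterating, $\Delta^n g$ would have degree $d-n<0$ when $d<n$, forcing $\Delta^n g \equiv 0$; evaluating at $0$ gives the claim.

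An alternative route, which sidesteps the induction on $\Delta$, is to use linearity to reduce the problem to a convenient basis of the space of polynomials of degree $<n$, namely the binomial coefficients $\binom{k}{j}$ for $0 \le j \le n-1$. For each such $j$ the absorption identity $\binom{n}{k}\binom{k}{j} = \binom{n}{j}\binom{n-j}{k-j}$ rewrites the sum as $\binom{n}{j}(-1)^j (1-1)^{n-j}$ via the binomial theorem, and this vanishes precisely because $n-j \ge 1$.

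I do not expect a genuine obstacle here, as this is a classical identity. The only point that requires care is the reduction step: making sure that linearity really does let me restrict attention to a spanning set of the degree-$<n$ polynomials, and that the chosen family $\{\binom{k}{j}\}_{j=0}^{n-1}$ (equivalently, the falling factorials) indeed supplies one polynomial of each degree $0,1,\ldots,n-1$. Once that is in place, both routes close in a single line; I would present the finite-difference argument as the primary proof, since it is the cleaner of the two for exposition.
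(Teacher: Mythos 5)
Your proposal is correct, and your primary route is genuinely different from the paper's. The paper proves the lemma by expanding $g$ in the basis of falling factorials $(k)_l = k(k-1)\cdots(k-l+1)$ and then, for each $l<n$, absorbing $(k)_l$ into $\binom{n}{k}$ to reduce the alternating sum to $\frac{n!}{(n-l)!}\sum_{k=l}^n(-1)^k\binom{n-l}{n-k}$, which vanishes by the binomial theorem since $n-l\ge 1$; this is exactly the ``alternative route'' you sketch (your $\binom{k}{j}$ differs from $(k)_j$ only by the constant $j!$, which is immaterial by linearity). Your primary argument via the forward difference operator --- identifying the sum with $(-1)^n\Delta^n g(0)$ and noting that $\Delta$ strictly lowers degree, so $\Delta^n$ annihilates polynomials of degree below $n$ --- is a cleaner conceptual packaging of the same fact: it replaces the explicit basis computation with one structural observation about $\Delta$, at the cost of first establishing the iterated-difference expansion by induction. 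The paper's version is more self-contained and computational; yours localizes all the work in a single degree-counting step. Both close the argument completely, and your care about the spanning property of the falling-factorial family (one polynomial of each degree $0,\dots,n-1$, hence a basis by triangularity) is exactly the point the paper glosses over with ``$c_l$ are some coefficients.''
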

\begin{proof}
Every polynomial $g(k)$ of degree $t$ can be represented in the following form
$$
g(k) = \sum_{l=0}^tc_l(k)_l,
$$
where
$$
(k)_l=k(k-1)\ldots(k-l+1),
$$
ans $c_l$ are some coefficients.

For every $l<n$
\begin{multline}
\sum_{k=0}^n(-1)^k\binom{n}{k}(k)_l=
\sum_{k=l}^n(-1)^k\binom{n}{k}(k)_l\\
=
\sum_{k=l}^n(-1)^k\frac{n!k!}{k!(n-k)!(k-l)!}=
\sum_{k=l}^n(-1)^k\frac{n!k!}{k!(n-k)!(k-l)!}
\\
=
\frac{n!}{(n-l)!}\sum_{k=l}^n(-1)^k\frac{(n-l)!}{(n-k)!(k-l)!}\\
=\frac{n!}{(n-l)!}\sum_{k=l}^n(-1)^k\binom{n-l}{n-k}=0,
\end{multline}
therefore, 
$$
\sum_{k=0}^n(-1)^k\binom{n}{k}g(k)=0.
$$
\end{proof}
Finally, using the evident asymptotic
$$
p_{fl} =\frac{5}{4(2q-1)}=O\(\frac{1}{q}\),\quad \text{as } q\to\infty,
$$
along with Lemma~\ref{auxLemma} and the equality~\req{pfsml}, we obtain the statement of Theorem~\ref{th::asymBigCirc}.
\end{proof}

\end{document}